\theoremstyle{plain}
\newtheorem{theorem}{Theorem}
\newtheorem{lemma}{Lemma}
\newtheorem{claim}{Claim}
\newtheorem{proposition}{Proposition}
\theoremstyle{remark} 
\newtheorem{example}{Example}
\newtheorem{condition}{Condition}
\newcommand{\eps}{\varepsilon}
\newcommand{\ul}{\underline}
\newcommand{\ol}{\overline}
\newcommand{\df}{\mathrm{d}}
\newcommand{\bdis}{\begin{displaymath}}
\newcommand{\edis}{\end{displaymath}}
\newcommand{\beq}{\begin{equation}}
\newcommand{\eeq}{\end{equation}}
\newcommand{\bea}{\begin{eqnarray*}}
\newcommand{\eea}{\end{eqnarray*}}
\newcommand{\bean}{\begin{eqnarray}}
\newcommand{\eean}{\end{eqnarray}}
\newcommand{\R}{\mathbb{R}}
\newcommand{\E}{\mathbb{E}}
\DeclareMathOperator*{\supp}{supp}
\DeclareMathOperator*{\co}{co}
\begin{document}

\begin{frontmatter}

\title{On Monotone Persuasion}
\runtitle{On Monotone Persuasion}

\begin{aug}
\author[id=au1,addressref={add1}]{\fnms{Anton}~\snm{Kolotilin}\ead[label=e1]{akolotilin@gmail.com}}
\author[id=au2,addressref={add1}]{\fnms{Hongyi}~\snm{Li}\ead[label=e2]{hongyi@gmail.com}}
\author[id=au3,addressref={add2}]{\fnms{Andriy}~\snm{Zapechelnyuk}\ead[label=e3]{azapech@gmail.com}}
\address[id=add1]{%
\orgdiv{School of Economics},
\orgname{UNSW Business School}}

\address[id=add2]{%
\orgdiv{School of Economics},
\orgname{University of Edinburgh}}
\end{aug}

\support{Early versions of the results in this paper were presented in working paper versions of \citet{KL} and \citet{KMZ}. We thank Simon Board, Tymofiy Mylovanov, Juuso Toikka, and Alexander Wolitzky for helpful comments. 
Kolotilin gratefully acknowledges financial support from the Australian Research Council Discovery Early Career Research Award DE160100964.
Kolotilin and Li gratefully acknowledge financial support from the Australian Research Council Discovery Project DP240103257.
Zapechelnyuk gratefully acknowledges financial support from the Economic and Social Research Council Grant ES/N01829X/1. 
}
\begin{abstract}
We study monotone persuasion in the linear case, where posterior distributions over states are summarized by their mean. 
We solve the two leading cases where optimal unrestricted signals can be nonmonotone. First, if the objective is s-shaped and the state is discrete, then optimal monotone signals are upper censorship, whereas optimal unrestricted signals may require randomization. Second, if the objective is m-shaped and the state is continuous, then optimal monotone signals are interval disclosure, whereas optimal unrestricted signals may require nonmonotone pooling. We illustrate our results with an application to media censorship.
\end{abstract}
\begin{keyword}
\kwd{Bayesian persuasion}
\kwd{monotone persuasion}
\kwd{interval partitions}
\end{keyword}
\begin{keyword}[class=JEL] %
\kwd{D82}
\kwd{D83}
\end{keyword}

\end{frontmatter}
\newpage
\section{Introduction\label{Intro}}

The literature on Bayesian persuasion has largely focused on the \emph{linear} case, where the state space is one-dimensional and posterior distributions over states are summarized by their mean (e.g., \citealt{GK-RS}, \citealt{KMZL}, \citealt{Kolotilin2017}, and \citealt{DM}). The standard approach has been to analyze \emph{unrestricted} persuasion, where the set of feasible signals is unrestricted. In reality, however, various constraints arise due to incentive, legal, or other practical considerations. Two such constraints are that signals should be \emph{deterministic} and \emph{monotone}. For instance, the bank regulator may be unable to use a stress test that credibly and varifiably randomizes scores or gives higher scores to weaker banks (\citealt{GL}).

These concerns motivate the study of \emph{monotone persuasion} (\citealt{Mensch} and \citealt{OR}) where all feasible signals are deterministic and monotone, so that they partition the state space into convex sets (i.e., intervals and singletons).\footnote{Relatedly, \citet{Ivanov},  \citet{aybas2019persuasion}, \citet{hopenhayn2023optimal},  and \citet{lyu2023coarse} study Bayesian persuasion under a different constraint: the set of signal realizations contains only $N$ elements.} However, these studies do not address the linear case.\footnote{\citet{Mensch} studies general nonlinear monotone persuasion and discusses a special subcase of the linear case in Section 5.2 where the objective function is quadratic, and thus convex or concave, so full or no disclosure is optimal. In contrast, \citet{OR} study specific nonlinear monotone persuasion, which overlaps with the linear case only when the objective is linear, so all signals are optimal.}
In the literature that deals with the linear case, \citet{DM} delineate conditions under which optimal unrestricted persuasion is monotone, so that standard results apply.\footnote{Relatedly, \citet{ABSY} show that, if the state is continuous, it is without loss of optimality to restrict attention to (possibly nonmonotone) deterministic signals or to (possibly stochastic) signals such that a higher state induces a higher lottery over signal realizations with respect to first-order stochastic dominance. More generally, \citet{KZ-WP} show that, regardless of whether the state is discrete or continuous, it is without loss of generality to restrict attention to (possibly stochastic) signals such that a higher state induces a higher lottery over signal realizations with respect to the likelihood ratio order.} It remains an open question what optimal monotone signals look like when optimal unrestricted signals are nonmonotone. We answer this question in the two leading cases that are most relevant in applications.

The first case is the simplest case where randomization is valuable: the state is discrete, and the objective function is s-shaped (convex-concave). Here, it is known that the optimal unrestricted signal is stochastic upper censorship that separates low states, pools high states, and randomizes between separation and pooling at the cutoff state. We show that any optimal monotone signal has the same upper censorship form but does not randomize at the cutoff state.

The second case is the simplest case where nonmonotone pooling of states is valuable: the state is continuous, and the objective function is m-shaped (concave-convex-concave). If optimal unrestricted signals are nonmonotone, then they induce two signal realizations that concavify the objective. We show that any optimal monotone signal in this case is either no disclosure or a cutoff rule that reveals whether the state is below or above a cutoff.

In both cases, existing approaches from the persuasion literature -- such as concavification and linear programming duality -- do not apply because the monotone persuasion problem is not a linear program. Instead, the key step in our approach narrows down the set of possible optimal monotone signals to a simple class by showing that any monotone signal outside of this class is dominated by a  signal in this class.

To illustrate the relevance of our two cases, we use our results to obtain novel economic insights in the media censorship model of \citet{KMZ}, which features a government, heterogeneous citizens, and media outlets. They assume that initially there is a continuum of media outlets and the distribution of citizens' types is unimodal. For any initial set of media outlets, we show that the government's problem of media censorship reduces to a monotone persuasion problem.
Our first case corresponds to the case where there is initially a finite number of media outlets and the distribution of citizens' types is unimodal. In this case, the government permits all sufficiently supportive outlets and censors all other media outlets, which extends the result of \citet{KMZ} on the optimality of upper censorship from the continuous case to the discrete one. Our second case corresponds to the case where there is initially a continuum of media outlets and the distribution of citizens' types is bimodal (i.e., society is polarized). In this case, we obtain a novel result: the government either censors all outlets or permits only one moderate  outlet.

\section{Model}\label{s:model}

A {\it state} $\omega\in [0,1]$ is a random variable with a prior probability distribution function $F$. A {\it signal} reveals information about the state. An {\it objective} $V:[0,1]\mapsto \R$ is a twice continuously differentiable function of the expected state $m$ induced by a signal.   

In many applications, the state is either continuous or discrete. The state is \emph{continuous} if $F$ has a strictly positive density $f$ on $[0,1]$. The state is \emph{discrete} if the support of $F$, denoted by $\supp (F)$, is a finite subset of $[0,1]$. The discrete density is also denoted by $f$.

In an \emph{unrestricted persuasion problem}, a signal can be arbitrarily correlated with the state. 
By Blackwell's informativeness theorem, there exists a signal  that induces a probability distribution $G$ of the expected state $m$ iff the prior distribution $F$ is a mean-preserving spread of $G$  (e.g., \citealt{Kolotilin2017}). Thus, the unrestricted persuasion problem is to find an \emph{optimal unrestricted signal} that maximizes $\int_0^1 V(m)\df G(m)$ over distributions $G$ such that $F$ is a mean-preserving spread of~$G$.

In a \emph{monotone persuasion problem}, a signal is required to be \emph{monotone}: it pools the states into convex sets (i.e., intervals and singletons) and reveals which set contains the realized state. Formally, a monotone signal is an increasing function $\mu:[0,1]\mapsto [0,1]$. W.l.o.g., we identify each signal realization $m$ with the expected state induced by this realization, so $m=\E[\omega|\mu(\omega)=m]$. 
Thus, the monotone persuasion problem is to find an \emph{optimal monotone signal} that maximizes $\int_0^1 V(\mu(\omega))\df F(\omega)$ over monotone signals $\mu$.

\citet{KG} show that full disclosure (resp., no disclosure) is an optimal unrestricted signal, and thus an optimal monotone signal, if the state is discrete and the objective function is convex (resp., concave). \citet{DM} show that an optimal unrestricted signal is monotone if the state is continuous and the objective function is {\it affine closed}. In particular, $V$ is affine closed if it has no m-shaped (concave-convex-concave) region.

We study the two leading cases where an optimal unrestricted signal may be nonmonotone. In Section \ref{s:discrete}, the objective is s-shaped (convex-concave) but the state is discrete. In Section \ref{s:cont}, the state is continuous but the objective is m-shaped (concave-convex-concave).

\section{Discrete State and S-Shaped Objective}\label{s:discrete}

In this section, the state is discrete and the objective is s-shaped.
The objective function $V$ is {\it s-shaped} if there exists $0<\omega_M<1$ such that $V$ is strictly convex on $[0,\omega_M]$ and strictly concave on $[\omega_M,1]$.

A signal is \emph{stochastic upper censorship} if there exist $\omega^*\in\supp(F)$ and $q^*\in[0,1]$ such that states in $[0,\omega^*)$ are separated, states in $(\omega^*,1]$ are pooled, and state $\omega^*$ is separated with probability $q^*$ and pooled with probability $1-q^*$. Let 
\begin{equation*}\label{e:m*}
m^*=\frac{\omega^*(1-q^*)f(\omega^*)+\sum_{\omega>\omega^*}\omega f(\omega)}{(1-q^*)f(\omega^*)+\sum_{\omega>\omega^*} f(\omega)}	
\end{equation*}
be the expected state conditional on the pooling signal realization. A stochastic upper-censorship signal with $(\omega^*,q^*)$ is \emph{deterministic upper censorship} if $q^*\in\{0,1\}$. This is the monotone signal $\mu$ given by
\[
\mu(\omega)=\begin{cases}
\omega, & \omega\in [0,\omega^*),\\
\omega^*, & \omega=\omega^*\text{ and } q^*=1, \\
m^*, & \omega=\omega^*\text{ and } q^*=0, \\
m^*, &  \omega\in(\omega^*,1],
\end{cases}
\]

\citet{AC16} and \citet{KMZ} show that there exist unique $\omega^*\in \supp (F)$ and $q^*\in [0,1]$ satisfying
\begin{equation}\label{e:cens}
V(m^*) +V'(m^*)(\omega^*-m^*) \geq V(\omega^*),\quad \text{with equality if $(\omega^*,q^*)\neq (0,0)$},
\end{equation}
such that the optimal unrestricted signal is stochastic upper censorship with $(\omega^*,q^*)$. Condition \eqref{e:cens} is the first-order necessary condition for optimality, which holds with equality at interior $(\omega^*,q^*)$ and holds with inequality at boundary  $(\omega^*,q^*)= (0,0)$.

\begin{theorem}\label{t:one} If the state is discrete and $V$ is s-shaped, then any optimal monotone signal is upper censorship. Moreover, if $(\omega^*,q^*)$ is given by \eqref{e:cens}, then any optimal monotone signal is upper censorship with $(\omega^*,0)$ or $(\omega^*,1)$.
\end{theorem}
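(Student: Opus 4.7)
The plan is to prove the theorem in two parts. First, I would show that any optimal monotone signal has the form of deterministic upper censorship. Second, I would identify the cutoff as $\omega^*$ from~\eqref{e:cens} and conclude that the only possibilities are upper censorship with $(\omega^*,0)$ and upper censorship with $(\omega^*,1)$.

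For the first part, I argue by contradiction: suppose $\pi=\{A_1,\ldots,A_n\}$ is an optimal monotone partition but some non-last block $A_i$ with $i<n$ has $|A_i|\ge 2$. The first reduction uses strict concavity of $V$ on $[\omega_M,1]$: if two consecutive blocks both have means in $[\omega_M,1]$, pooling them (a valid monotone move) strictly raises the value; since block means are strictly increasing, this forces at most one block, necessarily $A_n$, to have mean in $[\omega_M,1]$. The second reduction uses strict convexity of $V$ on $[0,\omega_M]$: any non-last non-singleton block whose states all lie in $[0,\omega_M]$ can be separated into singletons with a strict gain via Jensen's inequality. After these two reductions, the only configuration left to rule out is that $A_{n-1}$ straddles $\omega_M$---that is, $m_{n-1}<\omega_M$ but $\max A_{n-1}>\omega_M$, with $A_n$ lying entirely above $\omega_M$.

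The main technical step addresses this residual straddling case. I would compare $\pi$ with two deterministic upper censorship candidates: (a) merging $A_{n-1}$ with $A_n$ into a single top pool; and (b) splitting $A_{n-1}$ at $\omega_M$, separating $A_{n-1}\cap[0,\omega_M]$ into singletons, and merging $A_{n-1}\cap(\omega_M,1]$ with $A_n$. The key claim is that at least one of (a) or (b) strictly exceeds the value of $\pi$. The difference between the two candidates' values reduces to a Jensen-type expression for $V$ on a two-point configuration straddling $\omega_M$, and its sign tracks whether $V$ is locally more convex or more concave across $\omega_M$: when (b) fails to strictly dominate $\pi$, strict concavity on $[\omega_M,1]$ forces (a) to strictly dominate, and conversely. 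I expect the main obstacle to be verifying this complementarity rigorously in all parameter configurations of $A_{n-1}$ and $A_n$.

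For the second part, once any optimum is known to be deterministic upper censorship, the problem reduces to a one-dimensional comparison of values $U(k)$ indexed by the first pooled state $\omega_k$. Writing $U(k)$ and $U(k+1)$ as the values of stochastic upper censorship at cutoff $\omega_k$ with $q=0$ and $q=1$ respectively, the derivative analysis of $W$ in $q$ underlying~\eqref{e:cens}---specifically, the sign of $V(\omega_k)-V(m^{(k)})-V'(m^{(k)})(\omega_k-m^{(k)})$ where $m^{(k)}$ is the pool mean---implies that $U$ is unimodal in $k$ with peak at the index of $\omega^*$. Hence the best deterministic upper censorship is either upper censorship with $(\omega^*,0)$ or upper censorship with $(\omega^*,1)$, with the choice determined by~\eqref{e:cens}. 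Uniqueness of $(\omega^*,q^*)$ from \citet{AC16} and \citet{KMZ} then completes the argument.
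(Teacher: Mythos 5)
Your overall structure (show the optimum is deterministic upper censorship, then pin down the cutoff via unimodality) matches the paper, and your Reductions 1 and 2 are valid and cleaner than the paper's setup, which works directly with any two adjacent pools. But the main technical step has a genuine gap. In the residual case you compare $\pi$ against (a) merging $A_{n-1}$ with $A_n$ and (b) separating the sub-$\omega_M$ part of $A_{n-1}$ and merging the rest with $A_n$, and assert that at least one must strictly dominate. You do not prove this---you flag it yourself as the main obstacle---and it is not established that this dichotomy holds. The paper's proof chooses a different second surgery $\mu_+$: it peels off \emph{only the top state} $\omega_j$ of the first pool, leaving the rest of that pool and the next pool intact. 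Crucially, $\mu_+$ is generally \emph{not} upper censorship; the paper only needs a strictly better monotone signal to contradict optimality. That minimal choice is what makes the argument work: the claim $W\geq W_+\Rightarrow W<W_-$ is obtained by combining four explicit convexity/concavity inequalities in the six values $V(m_{i:j-1})$, $V(m_{i:j})$, $V(\omega_M)$, $V(\omega_j)$, $V(m_{j+1:k})$, $V(m_{i:k})$ via a Fourier--Motzkin-style elimination. Your (b) is a much larger surgery (separating possibly several states and merging several more), and there is no evidently analogous linear-inequality system relating $\pi$, (a), and (b); your description of a ``Jensen-type expression on a two-point configuration'' does not engage with the multi-point structure of $A_{n-1}$ and $A_n$. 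So the claimed complementarity is, at best, unverified, and the route you envision is not the paper's.

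There is also a smaller gap in the second part. You say the sign of $V(\omega_k)-V(m^{(k)})-V'(m^{(k)})(\omega_k-m^{(k)})$ ``implies $U$ is unimodal in $k$,'' but that requires showing this expression is single-crossing from above in $k$, which does not follow from the first-order condition \eqref{e:cens} or from the mere existence/uniqueness of $(\omega^*,q^*)$ cited from \citet{AC16} and \citet{KMZ}. The paper proves single-crossing separately (its Claim 1) by an integration-by-parts and monotone-rearrangement argument exploiting the sign structure of $V''$ around $\omega_M$; you would need to supply that or an equivalent.
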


We prove Theorem \ref{t:one} in two steps. The first step shows that any optimal monotone signal is upper censorship. Intuitively, since an s-shaped $V$ is convex for low states (which favours their separation) and concave for high states (which favours their pooling),\footnote{To see that convexity (resp., concavity) of $V$ favours separation (resp., pooling), notice that, in the case with two states $\omega_1<\omega_2$, separation (resp., pooling) yields $V(\omega_1) f(\omega_1) + V(\omega_2)f(\omega_2)$ (resp., $V(\omega_1f(\omega_1) + \omega_2f(\omega_2))$).} it is optimal to separate low states and pool high states, as prescribed by upper censorship. There is a simple linear-programming proof for the case of stochastic upper censorship (see \citealt{KMZ}), but this proof cannot be extended to the case of deterministic upper censorship, because the monotone persuasion problem is a discrete optimization problem when the state is discrete. Instead, for each monotone signal that is not upper censorship, our proof constructs a dominating monotone signal that is  upper censorship. In particular, in the case of three states $\omega_1<\omega_2<\omega_3$, our construction shows that pooling of states $\omega_1$ and $\omega_2$ and separation of state $\omega_3$ is dominated by either full disclosure or no disclosure.

The second step shows that the optimal deterministic upper censorship cutoff coincides with the optimal stochastic upper censorship cutoff. Intuitively, since $V$ is s-shaped, the value of stochastic upper censorship is quasiconcave in $(\omega^*,q^*)$ in the lexicographic order, where $(\omega^*_2,q^*_2)\geq (\omega^*_1,q^*_1)$ iff $\omega^*_2>\omega^*_1$ or $\omega^*_2=\omega^*_1$ and $q^*_2\geq q^*_1$. Thus, if the optimal unrestricted signal is stochastic upper censorship with $(\omega^*,q^*)$, then any optimal monotone signal is deterministic upper censorship with either $(\omega^*,0)$ or $(\omega^*,1)$.

Finally, we remark that the optimal deterministic signal may be nonmonotone, when the state is discrete and the objective is s-shaped.
For example, suppose that state $\omega$ takes only three values $0$, $\varepsilon$, and $1$ with probabilities $(1-q)/2$, $q/2$, and $1/2$ where $\eps,q\in (0,1/2)$. Suppose that $V$ is such that the optimal unrestricted signal is stochastic upper censorship that separates state $0$ with probability $q$ when state $\omega$ takes only two values $0$ and $1$ with equal probabilities. In this case, if $\eps$ is sufficiently small, the optimal deterministic signal pools states $0$ and $1$ and separates state $\varepsilon$.\footnote{Intuitively, as $\varepsilon \to 0$, by continuity, the value of this signal converges to the value of optimal stochastic upper censorship, while the value of any of the other 4 deterministic signals is bounded away from the value of optimal stochastic upper censorship.}

\section{Continuous State and M-Shaped Objective}\label{s:cont}

In this section, the state is continuous and the objective is m-shaped.
The objective function $V$ is {\it m-shaped} if there exist $0<\omega_L<\omega_R<1$ such that $V$ is strictly concave on $[0,\omega_L]$, strictly convex on $[\omega_L,\omega_R]$, and strictly concave on $[\omega_R,1]$.

A monotone signal $\mu$ is {\it interval disclosure} with cutoffs $0\le \omega_L^*\le \omega_R^*\le 1$ if states in the middle interval $[\omega_L^*,\omega_R^*]$ are separated and states in the left interval $[0,\omega_L^*)$ and in the right interval $(\omega_R^*,1]$ are pooled, so
\[
\mu(\omega)=\begin{cases}
m_L^*, &\omega\in[0,\omega_L^*),\\
\omega, & \omega\in[\omega_L^*,\omega_R^*],\\
m_R^*, &  \omega\in(\omega_R^*,1],
\end{cases}
\]
where
\[
m_L^*=\E\big[\omega|\omega\in[0,\omega_L^*]\big] \quad\text{and}\quad m_R^*=\E\big[\omega|\omega\in[\omega_R^*,1]\big]
\]
are the expected states conditional on the pooling signal realizations. 
A monotone signal $\mu$ is a {\it cutoff rule} with cutoff $\omega^*$ if states in the intervals $[0,\omega^*)$ and $(\omega^*,1]$ are pooled. Finally, a monotone signal $\mu$ is {\it no disclosure} if all states in $[0,1]$ are pooled. Note that no disclosure is a special case of a cutoff rule, which is, in turn, a special case of interval disclosure.

It is straightforward to obtain the first-order necessary conditions for optimality, which are similar to conditions in \citet{Kolotilin2017}. If interval disclosure with interior cutoffs $0<\omega^*_L<\omega^*_R<1$ is optimal, then 
\begin{gather}
V(m_L^* )+V'(m_L^*)(\omega_L^*-m_L^*)= V(\omega_L^* ),\label{uml}\\
V(m_R^* )+V'(m_R^*)(\omega_R^*-m_R^*)= V(\omega_R^* )\label{umh}
\end{gather}	
If a cutoff rule with interior cutoff $\omega^*\in (0,1)$ is optimal, then 
\begin{gather}
V(m_L^*)+V'(m_L^*)(\omega^*-m_L^*)=V(m_R^*)+V'(m_R^*)(\omega^*-m_R^*),\label{u*m}
\end{gather}	
where $m_L^*=\mathbb{E}[\omega|\omega\in [0,\omega^*]]$ and $m_R^*=\mathbb{E}[\omega|\omega\in [\omega^*,1]]$.
Also, no disclosure is suboptimal if there exists a cutoff rule with cutoff $\omega^*\in (0,1)$ such that 
\begin{equation}
V(m_L^* )F(\omega^*)	+V(m_R^* )(1-F(\omega^*))> V(\mathbb{E}[\omega]).\label{u*e}	
\end{equation}

\begin{theorem}\label{t-two}
If the state is continuous and $V$ is m-shaped, then any optimal monotone signal is interval disclosure. Moreover:
\begin{enumerate}
	\item If there exist $\omega_L^*,\omega_R^*\in(\omega_L,\omega_R)$ with $\omega_L^*<\omega_R^*$ such that \eqref{uml} and \eqref{umh} hold, then the optimal monotone signal is interval disclosure with cutoffs $\omega_L^*$ and $\omega_R^*$. Also, $m^*_L\in(0,\omega_L)$ and $m^*_R\in(\omega_R,1)$.

	\item Else if there exists $\omega^*\in(0,1)$ such that \eqref{u*e} holds, then an optimal monotone signal is a cutoff rule with some cutoff $\omega^*\in (0,1)$ that satisfies \eqref{u*m} and \eqref{u*e}. Also, $m^*_L\in(0,\omega_L)$ and $m^*_R\in(\omega_R,1)$.

	\item Else, an optimal monotone signal is no disclosure.
\end{enumerate}

\end{theorem}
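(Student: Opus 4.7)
The plan is to prove Theorem~\ref{t-two} in two stages. The first is a structural reduction showing that any optimal monotone signal must be interval disclosure; the second is a first-order analysis yielding the three cases.

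For the structural reduction, fix any monotone signal $\mu$ with its partition of $[0,1]$ into maximal pooling intervals $\{[a_j,b_j]\}_j$ separated by full-separation regions, and apply a sequence of local improvements. (i) No pool lies entirely inside the open convex region $(\omega_L,\omega_R)$: strict convexity of $V$ together with Jensen's inequality shows that separating such a pool strictly raises $\int V\df G$. (ii) Any two pools both contained in $[0,\omega_L]$, together with the separated mass between them, can be merged, since the merge is a mean-preserving contraction and $V$ is concave there; the symmetric statement holds on $[\omega_R,1]$. (iii) No pool with right endpoint $b<\omega_L$ can be immediately followed by separation: since $V$ is strictly concave on $[m,b]\subset[0,\omega_L]$, one has $V(m)+V'(m)(b-m)>V(b)$, so extending the pool rightward strictly improves the objective; symmetrically, no pool with left endpoint $a>\omega_R$ can be preceded by separation. (iv) A pool $[a,b]$ with $a>0$ and $b\le\omega_L$ can be extended leftward to begin at $0$ by the same concavity argument; symmetrically for a right pool. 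Iterating (i)--(iv), $\mu$ reduces to at most one left pool $[0,\omega_L^*]$, full separation on $[\omega_L^*,\omega_R^*]$, and at most one right pool $[\omega_R^*,1]$ with $0\le\omega_L^*\le\omega_R^*\le 1$, i.e., interval disclosure.

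For the first-order analysis, the reduced problem is to maximize
\[
V(m_L^*)F(\omega_L^*)+\int_{\omega_L^*}^{\omega_R^*} V(\omega)\df F(\omega)+V(m_R^*)\bigl(1-F(\omega_R^*)\bigr)
\]
over $0\le\omega_L^*\le\omega_R^*\le 1$, with $m_L^*=\E[\omega\mid\omega\le\omega_L^*]$ and $m_R^*=\E[\omega\mid\omega\ge\omega_R^*]$. Differentiating at an interior optimum with $\omega_L^*<\omega_R^*$ yields \eqref{uml} and \eqref{umh}; imposing $\omega_L^*=\omega_R^*=\omega^*$ yields \eqref{u*m}; and strict improvement over no disclosure corresponds to \eqref{u*e}. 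Interior locations $\omega_L^*,\omega_R^*\in(\omega_L,\omega_R)$ with $m_L^*\in(0,\omega_L)$ and $m_R^*\in(\omega_R,1)$ follow by observing that \eqref{uml} requires the tangent to $V$ at $m_L^*$ to meet $V$ again at $\omega_L^*$: strict concavity on $[0,\omega_L]$ keeps the tangent strictly above $V$ there, forcing $\omega_L^*>\omega_L$, while strict convexity on $[\omega_L,\omega_R]$ brings $V$ back to the tangent at some interior point strictly before $\omega_R$, and the analogous argument on the right handles $\omega_R^*$ and $m_R^*$. Cases~1--3 are then distinguished by which FOC admits a solution.

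The main obstacle is step (iii) together with its subtler variants in (ii) and (iv): excluding configurations in which a pool contained in $[0,\omega_L)$ coexists with a second pool that straddles $\omega_L$, separated by a gap of fully separated states. A direct merge need not improve because the merged pool extends into the convex region where $V$ is not concave. My plan is to resolve this via the FOC at the right endpoint of the inner pool, which lies in $[0,\omega_L)$ where $V$ is strictly concave: this forces strictly positive marginal value of rightward extension, contradicting optimality of the gap. The symmetric argument at $\omega_R$ handles the right-hand side, and iterating these reductions produces the interval-disclosure structure.
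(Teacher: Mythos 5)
Your high-level strategy (reduce any monotone signal to interval disclosure by a sequence of local improvements, then run a first-order analysis over the reduced class) is the same as the paper's, and several of your local moves are correct. But the list (i)--(iv) is incomplete, and the configurations it misses are precisely the hard ones that the paper's proof works to exclude.

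First, you have no move that addresses a separating region protruding outside $[\omega_L,\omega_R]$ when there is no pool adjacent to it on the concave side. For instance, full separation on $[0,a]$ with $a<\omega_L$ followed by a pool $[a,b]$ that straddles $\omega_L$. Step (iii) needs a pool to the left of the gap; step (iv) needs $b\le\omega_L$; steps (i) and (ii) do not apply. The paper's remedy is different from any of yours: it \emph{creates} a new pool inside the separated region, pooling $(0,a)$ (or $(\ol\omega_i,\omega_L)$ in general), which is a strict improvement by concavity (Claim 4, Part 1). Extending the straddling pool leftward, which would be the natural analogue of your (iv), does \emph{not} work here, because the pool mean may lie in the convex region and the tangent-line inequality can fail.

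Second, and more seriously, you have no move that handles two \emph{adjacent} pooling intervals with no separating gap between them. This is exactly the configuration you flag as the ``main obstacle,'' but your proposed resolution via the FOC at the right endpoint of the inner pool presupposes a gap of separated states to extend into; with adjacency there is no such gap, and the marginal object of interest becomes the location of the shared boundary $\omega$ between the two pools, not the extension of either pool into separated mass. The paper devotes the entire second half of Lemma~\ref{l:T2}'s proof to this case: it writes the stationarity condition $\delta(\omega)=0$ for the boundary, derives $V'(m_{i+1})<(V(m_{i+1})-V(m_i))/(m_{i+1}-m_i)$ from $\delta(m_i)<0<\delta(\omega_L)$, and then shows via single-crossing of $\nu'$ that $\nu(x)>0$ on $(m_i,m_{i+1})$, so merging the two pools strictly improves. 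None of this appears in your sketch, and the tangent-above-$V$ argument you invoke is not available because $m_{i+1}$ can sit in the convex region.

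Finally, the first-order analysis is underdeveloped: identifying \eqref{uml}, \eqref{umh}, \eqref{u*m}, \eqref{u*e} as stationarity/comparison conditions does not by itself establish that a solution of those conditions is a global optimum over the two-parameter family $0\le\omega_L^*\le\omega_R^*\le 1$. The paper handles this by appealing to Proposition~3 of \citet{Kolotilin2017} for the iff characterization; you would need either that external result or an argument that the reduced problem is well-behaved (e.g., a single-crossing/quasiconcavity property analogous to Claim~\ref{c:sc} in the discrete proof).
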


\begin{figure}[t]
	\begin{tiny} 
	\centering 
	\subfloat[Interval disclosure with cutoffs $\omega_L^*<\omega_R^*$]
	{\label{sub:Switch6}
	\includegraphics[scale=0.58]
	{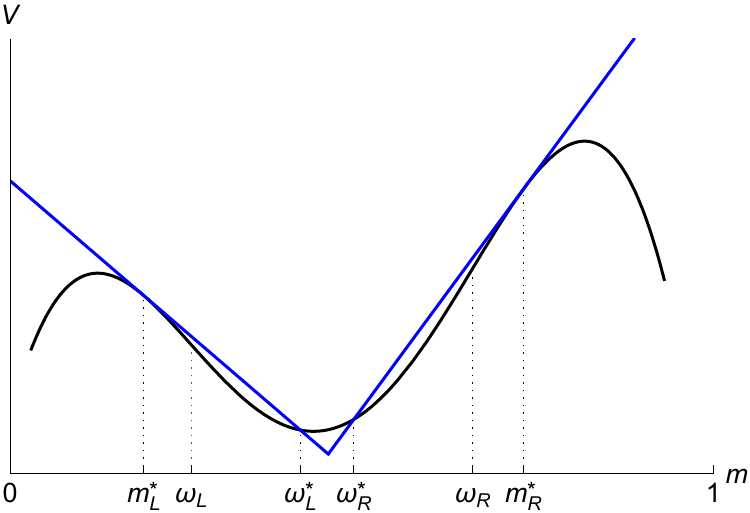}}
	\quad
	\subfloat[Cutoff rule with cutoff $\omega^*$]
	{\label{sub:Switch7}
	\includegraphics[scale=0.58]
	{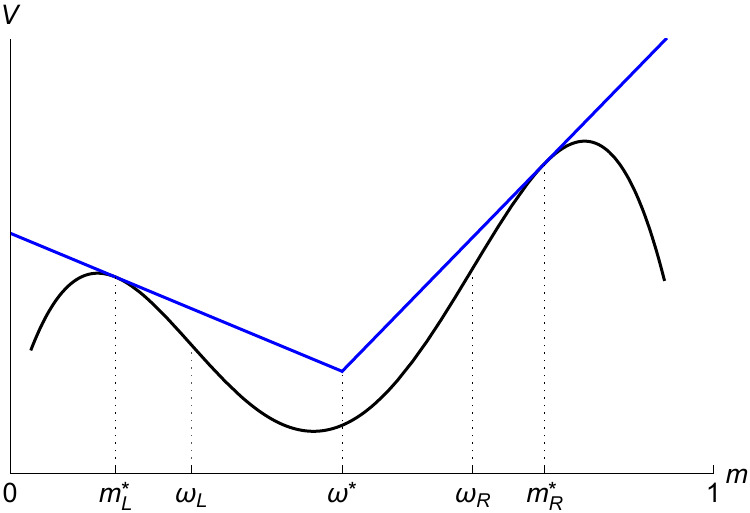}} 
\\
	\subfloat[Cutoff rule with cutoff $\omega^*$]
	{\label{sub:Switch8}
	\includegraphics[scale=0.58]
	{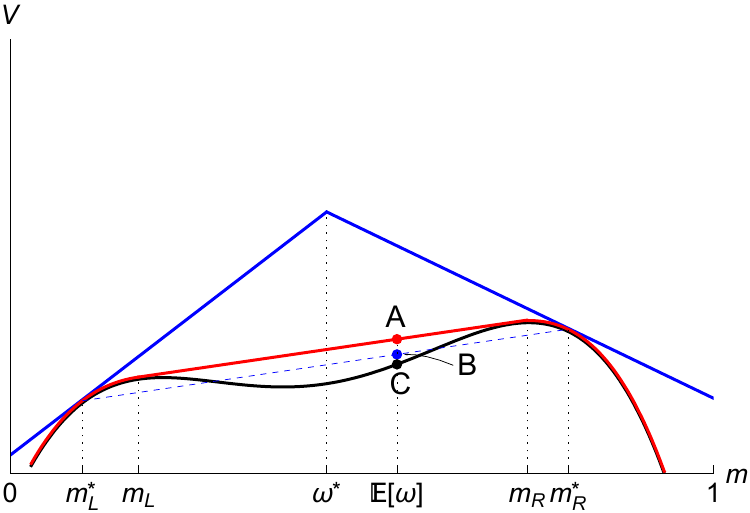}} 
		\quad
	\subfloat[No disclosure]
	{\label{sub:Switch9}
	\includegraphics[scale=0.58]
	{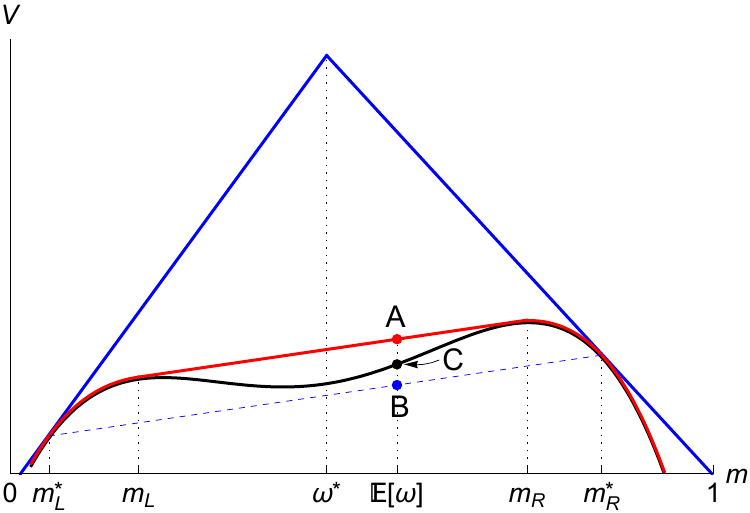}} 
	\end{tiny}
	\caption{Interval disclosure when $V$ is m-shaped.}
	\legend{Moving along Figures \ref{sub:Switch6} $\rightarrow$ \ref{sub:Switch7} $\rightarrow$ \ref{sub:Switch8} $\rightarrow$ \ref{sub:Switch9}, expected states $m^*_L$ and $m^*_R$ move away from each other, which means that the prior distribution $F$ puts increasingly more weight on left and right states (and less weight on middle states). In Figure \ref{sub:Switch6}, the tangents to $V$ at $m^*_L$ and $m^*_R$ cross $V$ at $\omega^*_L$ and $\omega^*_R$. In Figures \ref{sub:Switch7}, \ref{sub:Switch8}, and \ref{sub:Switch9}, the tangents to $V$ at $m^*_L$ and $m^*_R$ intersect at $\omega^*$. In Figures \ref{sub:Switch8} and \ref{sub:Switch9}, points $A$, $B$, and $C$ show the values of bipooling, a cutoff rule, and no disclosure.}
	\label{Fig:PropSwitchustar}
\end{figure}

We prove Theorem \ref{t-two} in two steps. The first step shows that any optimal monotone signal is interval disclosure. Intuitively, since an m-shaped $V$ is convex for middle states (which favours their separation) and concave for extreme states (which favours their pooling), it is optimal to separate middle states and pool extreme states, as prescribed by interval disclosure. There is a simple linear programming characterization of optimal unrestricted signals when $V$ is m-shaped (see \citealt{Kolotilin2017}), but all these signals may be nonmonotone. In this case, the standard approaches from the persuasion literature no longer apply. Instead, for each monotone signal that is not interval disclosure, our proof constructs a dominating monotone signal that is interval disclosure.

The second step delineates conditions under which an optimal monotone signal takes each of the three possible forms of interval disclosure: nondegenerate interval disclosure (Figure \ref{sub:Switch6}), a cutoff rule (Figures \ref{sub:Switch7} and \ref{sub:Switch8}), and no disclosure (Figure \ref{sub:Switch9}). 

If \eqref{uml} and \eqref{umh} hold (Figure \ref{sub:Switch6}), or if \eqref{u*m} and \eqref{u*e} hold and $V'(m_L^*)\le V'(m_R^*)$ (Figure \ref{sub:Switch7}), then the optimal unrestricted signal is interval disclosure (see \citealt{Kolotilin2017}). Otherwise (Figures \ref{sub:Switch8} and  \ref{sub:Switch9}), each optimal unrestricted signal is nonmonotone. This nonmonotone case arises iff the following condition holds (see \citealt{ABSY}).
\begin{condition}\label{c:bipooling}
There is a unique bitangent to $V$ whose tangent points $m_L$ and $m_R$ are such that $0<m_L<\E[\omega]<m_R<1$ and there exists $\omega^{**}\in (m_L,1) $ with $\E[\omega|\omega\in [0,\omega^{**}]]=m_L$ and $\E [\omega|\omega\in [\omega^{**},1]]>m_R$.
\end{condition}
Condition \ref{c:bipooling} says that a cutoff rule with cutoff $\omega^{**}$ induces expected states  $m^{**}_L=m_L$ and $m^{**}_R>m_R$, which intuitively means that the prior distribution $F$ is sufficiently spread out. \citet{KMS} and \citet{ABSY} show that, under Condition \ref{c:bipooling}, each optimal unrestricted signal is \emph{bipooling} in that it induces two expected states $m_L$ and $m_R$ that yield the value $\co V(\mathbb{E}[\omega])$, where $\co V(\mathbb{E}[\omega])$ is the concavification of $V$ at $\mathbb{E}[\omega]$,
\[
\co V(\mathbb{E}[\omega])=V(m_L)\frac{m_R-\E[\omega]}{m_R-m_L}+V(m_R)\frac{\E[\omega]-m_L}{m_R-m_L}.
\]
There exists deterministic bipooling, but then it is necessarily nonmotone. For example, there exist $\omega^{**}_L\in (0,m_L)$ and $\omega^{**}_R\in (m_L,1)$ such that states in $(\omega^{**}_{L},\omega^{**}_R)$ induce expected state $m_L$ and states in $[0,\omega^{**}_L)\cup (\omega^{**}_R,1]$ induce expected state $m_R$. Moreover, there exists stochastically monotone bipooling (i.e., a higher state induces a higher lottery over expected states with respect to first-order stochastic dominance), but then it is necessarily nondeterministic. For example, there exists $q^{**}\in (0,1)$ such that states in $[0,\omega^{**})$ induce expected states $m_L$ and $m_R$ with probabilities $q^{**}$ and $1-q^{**}$, and states in $(\omega^{**},1]$ always induce expected state $m_R$.

We find that, under Condition \ref{c:bipooling}, each optimal monotone signal is either a cutoff rule or no disclosure (Parts 2 and 3 of Theorem \ref{t-two}), and it yields a strictly lower value than the optimal unrestricted signal. In particular, the value of a cutoff rule with cutoff $\omega^*$ is $V(m^*_L)F(\omega^*)+V(m^*_R)(1-F(\omega^*)<\co V(\mathbb{E}[\omega])$,
and the value of no disclosure is $V(\E[\omega])<\co V(\mathbb{E}[\omega])$. If \eqref{u*e} holds, then a cutoff rule dominates no disclosure (Figure \ref{sub:Switch8}), and otherwise no disclosure dominates (Figure \ref{sub:Switch9}).

\section{Application to Media Censorship}\label{Media}

We illustrate our results using the media censorship model of \citet{KMZ}, who characterize optimal media censorship under the following two assumptions. First, there is a continuum of media outlets. Second, the distribution of citizens' types is unimodal. Our results allow to relax these two assumptions, one at a time. Theorem \ref{t:one} yields a characterization of an optimal censorship policy when there is a finite (possibly small) number of media outlets. Theorem \ref{t-two} yields a characterization of an optimal censorship policy when the distribution of citizens' types is bimodal (i.e., society is polarized).

\subsection{Model}

There is a government and a continuum of heterogeneous citizens.  The government's quality $\theta\in [0,1]$ has a distribution $T$ with a strictly positive density on $[0,1]$. Citizens are indexed by $r\in [0,1]$ that has a distribution $V$ with a continuously differentiable density on $[0,1]$. The utility of a citizen of type $r$ is
\[
u(a_r,\theta,r)=(\theta-r)a_r,
\] 
where $a_r\in\{0,1\}$ is the citizen's action.
The government's utility is the aggregate action in the society $\int_0^1 a_r\df V(r)$.

Citizens obtain information about government's quality $\theta$ through media outlets. 
Each media outlet is identified by its editorial policy $c\in [0,1]$, and it endorses action $a=1$ if $\theta\ge c$ and endorses action $a=0$ if $\theta<c$. 
The set of media outlets $C$ is a subset of $[0,1]$.

The government's censorship policy is a set of media outlets $X\subset C$ that are censored. The other media outlets in $C\backslash X$ are permitted to broadcast. The government's problem is to find an \emph{optimal censorship policy} that maximizes its expected utility over censorship policies $X$.

The timing is as follows. First, the government chooses a set $X\subset C$ of censored media outlets. Second,  government's quality $\theta$ is realized, and each permitted media outlet endorses action $a=1$ or $a=0$ according to its editorial policy. Finally, each citizen observes messages from all permitted media outlets, updates beliefs about $\theta$, and chooses an action.

\citet{KMZ} solve the case with $C=[0,1]$ (i.e., there is a continuum of media outlets) and an s-shaped $V$ (i.e., the distribution of citizens' types is unimodal). If it was possible to design any signal about government's quality $\theta$, then deterministic upper censorship with some cutoff $\theta^*\in [0,1)$ would be optimal for the government. Thus, it is optimal to censor all media outlets with editorial policies above $\theta^*$, as this censorship policy implements upper censorship with cutoff $\theta^*$. This approach is valid when $C=[0,1]$ and when the optimal unrestricted signal about $\theta$ is monotone. Our results allow to address more general cases.

\subsection{Reduction to Monotone Persuasion}

We start by showing that the government's problem of media censorship reduces to a monotone persuasion problem with an appropriately defined state.

Consider a censorship policy $X\subset C$. Let $y_X$ be a random variable equal to the conditional expectation of $\theta$ given messages from all media outlets in $C\backslash X$. Let $G_X$ denote the distribution of $y_X$. Each citizen of type $r$ chooses $a_r=1$ iff $r\leq y_X$. Then, the aggregate action is $\int_0^1 a_r\df V(r)=V(y_X)$, and the government's expected utility is $\int_0^1V(y)\df G_X(y)$. Let $\mathcal G_C$ denote the set of distributions $G_X$ induced by all censorship policies $X\subset C$. 

Define the state $\omega$ as the conditional expectation of $\theta$ given messages from all media outlets in $C$. That is, $\omega=y_{\varnothing}$ and its distribution is $F=G_\varnothing$. Consider a monotone signal $\mu$, which is an increasing function satisfying $\E[\omega|\mu(\omega)=m]=m$ for all $m$. Let $G_\mu$ denote the distribution of $m=\mu(\omega)$. Then, the value of $\mu$ is $\int_{0}^1 V(\mu(\omega))\df F(\omega)=\int_{0}^1 V(m)\df G_\mu(m)$. Let $\mathcal G_M$ denote the set of distributions $G_\mu$ induced by all monotone signals $\mu$. 

The next proposition shows that an outcome is implementable by a monotone signal iff it is implementable by a censorship policy. Thus, the government's problem of media censorship reduces to a monotone persuasion problem. 

\begin{proposition}\label{L:M-C}
$\mathcal G_C=\mathcal G_M$.
\end{proposition}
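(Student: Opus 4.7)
The plan is to prove both set inclusions $\mathcal{G}_C \subseteq \mathcal{G}_M$ and $\mathcal{G}_M \subseteq \mathcal{G}_C$. The central observation is that, for any $D \subseteq C$, the $\sigma$-algebra $\mathcal{F}_D$ generated by the messages of the outlets in $D$ partitions $[0,1]$ (in $\theta$) into ordered intervals, and since $D \subseteq C$, this partition is a coarsening of the one induced by $\mathcal{F}_C$, on whose atoms $\omega$ is constant.

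For $\mathcal{G}_C \subseteq \mathcal{G}_M$: fix $X\subset C$ and set $D = C\setminus X$. By iterated expectations, $y_X = \E[\theta\mid\mathcal{F}_D] = \E[\omega\mid\mathcal{F}_D]$, which is constant on each $\mathcal{F}_D$-atom (a union of $\mathcal{F}_C$-atoms on which $\omega$ is itself constant). Hence $\mu$ defined by $\mu(\omega) = y_X$ is a well-defined function of $\omega$; it is weakly increasing because both partitions consist of ordered intervals in $\theta$, and it satisfies $\E[\omega\mid\mu(\omega)=m]=m$ by the tower property. Therefore $G_X = G_\mu \in \mathcal{G}_M$.

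For $\mathcal{G}_M \subseteq \mathcal{G}_C$: fix a monotone signal $\mu$. Viewed as a function of $\theta$, $\mu\circ\omega$ is weakly increasing and constant on each $\mathcal{F}_C$-atom, so its level sets (the pools) are unions of $\mathcal{F}_C$-atoms and form a coarser ordered-interval partition. Define $D \subseteq C$ to consist of the outlets in $C$ whose threshold separates two distinct pools; then $\mathcal{F}_D$ induces exactly this pooling partition, so $y_{C\setminus D}(\theta) = \E[\theta\mid\mathcal{F}_D] = \mu(\omega(\theta))$ by another application of iterated expectations together with the self-consistency of $\mu$ (which ensures that the conditional expectation of $\omega$ on each pool equals the value of $\mu$ there). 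Hence $G_{C\setminus D} = G_\mu \in \mathcal{G}_C$.

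The main technical obstacle lies in the second direction: guaranteeing that, for every monotone $\mu$, such a $D \subseteq C$ can actually be selected. This reduces to checking that the interior boundaries of pools of $\mu\circ\omega$ are in fact boundaries between consecutive $\mathcal{F}_C$-atoms, and hence correspond -- up to standard right/left closure conventions and $F$-null sets -- to points of $C$. Treating both leading cases uniformly (finite $C$ and $C=[0,1]$) requires some attention to these conventions, but no deeper argument.
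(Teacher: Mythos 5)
Your proof is correct and takes essentially the same route as the paper: both directions set up the same explicit correspondence between censorship policies and monotone signals via ordered interval partitions, with the paper spelling out in formulas ($\ul c_X,\ol c_X$ in one direction; $\ul x_\mu,\ol x_\mu$ and $X=\bigl(\bigcup_m(\ul x_\mu(m),\ol x_\mu(m)]\bigr)\cap C$ in the other) exactly what your $\sigma$-algebra and tower-property language encodes. The obstacle you flag in the second direction is not actually a gap: since $\mu$ is a function of $\omega$ and $\omega$ is constant on each $\mathcal{F}_C$-atom, the pools of $\mu\circ\omega$ are automatically unions of $\mathcal{F}_C$-atoms, so pool boundaries necessarily fall between consecutive atoms and hence at points of $C$; the only detail left is the half-open convention, which the paper resolves by censoring the half-open intervals $(\ul x_\mu(m),\ol x_\mu(m)]$.
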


We illustrate the intuition for Proposition \ref{L:M-C} using an example with two media outlets whose editorial policies are $c_1$ and $c_2$.

\begin{example}\label{ex}
 Let $C=\{c_1,c_2\}$ with $0<c_1<c_2<1$. There are three states, $\omega_1=\E[\theta|\theta\le c_1]$, $\omega_2=\E[\theta|c_1\leq \theta\leq c_2]$, and $\omega_3=\E[\theta|\theta\ge c_2]$. There are four censorship policies, (i) $X=\varnothing$, (ii) $X=\{c_1\}$, (iii) $X=\{c_2\}$, and (iv) $X=\{c_1,c_2\}$. They correspond to four monotone signals, (i) full disclosure, (ii) pooling of states $\omega_1$ and $\omega_2$ and separation of state $\omega_3$, (iii) separation of state $\omega_1$ and pooling of states $\omega_2$ and $\omega_3$, and (iv) no disclosure. In particular, no censorship policy implements pooling of states $\omega_1$ and $\omega_3$ and separation of state $\omega_2$.	
\end{example}

\subsection{Discrete Unimodal Case}

Suppose that there is a finite number of media outlets (i.e., state $\omega$ is discrete), and that the distribution of citizens' types is unimodal (i.e., $V$ is s-shaped). By Theorem \ref{t:one}, the government optimally censors all media outlets whose editorial policies are above some cutoff.
That is, all censored media outlets are less supportive than all permitted media outlets in that they endorse the government's preferred action less frequently. To provide the intuition, we discuss  the effect of censoring the least supportive media outlet when there are only two media outlets.
\addtocounter{example}{-1}
\begin{example}[continued]
Relative to free media, by censoring media outlet $c_2$, the government gains support of moderate types in $(\omega_2,m^*)$, where $m^*=\E[\theta| \theta\geq c_1]$, when government's quality $\theta$ is between $c_1$ and $c_2$, but it loses support of opponent types in $(m^*,\omega_3)$ when government's quality $\theta$ is above $c_2$. The gain exceeds the loss, because there are fewer opponent types than moderate types, when the distribution of citizens' types is unimodal.	
\end{example}

The government may gain from using more sophisticated tools of media control than media censorship. For example, the government may prefer to replace media outlets with one government's media outlet that aggregates information from media outlets, possibly adding random noise. We now show that optimal media control may take intricate forms even when there are only two media outlets.
\addtocounter{example}{-1}
\begin{example}[continued] The optimal unrestricted signal may take the form of stochastic upper censorship where state $\omega_2$ is separated with probability $q\in (0,1)$. This signal is implemented by letting media outlet $c_1$ to broadcast freely and by randomly influencing media outlet $c_2$ as follows. With probability $q$, media outlet $c_2$ broadcasts freely. With probability $1-q$, media outlet $c_2$ is forced to repeat the message of media outlet $c_1$. In turn, the optimal deterministic signal may pool states $\omega_1$ and $\omega_3$ and separate state $\omega_2$. This signal is implemented by letting citizens observe only whether media outlets $c_1$ and $c_2$ send the same message or not.
\end{example}

\subsection{Continuous Bimodal Case}

Suppose that there is a continuum of media outlets $C=[0,1]$ (i.e., state $\omega$ is continuous with distribution $F=T$), and that the distribution of citizens' types is bimodal (i.e., society is polarized). For illustration, we restrict attention to the case where $V$ is m-shaped and the distribution of the government's quality is sufficiently spread out in that Condition \ref{c:bipooling} holds.

The government either optimally censors all media outlets (Part 3 of Theorem \ref{t-two}) or permits only one media outlet with a moderate editorial policy $c^*\in (0,1)$ (Part 2 of Theorem \ref{t-two}). It may seem counterintuitive that the government optimally censors not only the least supportive media outlets (as in the unimodal case) but also the most supportive ones. Intuitively, the bimodal case corresponds to a polarized society where most citizens are either supporters or opponents, rather than moderates. Thus, censoring most supportive media outlets ensures that supporters continue to choose the government's preferred action even if no permitted media outlets endorse it.

We now discuss two forms of optimal media control, which outperform optimal media censorship. Let $0<m_L<m_R<1$, $0<\omega^{**}_L<\omega^{**}<\omega^{**}_R<1$, and $q^{**}\in (0,1)$ be as in Section \ref{s:cont}. The first form is deterministic but nonmotone. Citizens observe only whether media outlets $c_L=\omega^{**}_L$ and $c_R=\omega^{**}_R$ send the same message or not. The second form is stochastically monotone but nondeterministic. Citizens observe the message of only one media outlet $c=\omega^{**}$ that is randomly influenced as follows. With probability $q^{**}$, media outlet $c$ broadcasts freely. With probability $1-q^{**}$, media outlet $c$ is forced to endorse the government's preferred action, regardless of the government's quality.

\section{Conclusion}

A lot of progress has been made on optimal persuasion when the objective is posterior mean measurable.
But little is known about optimal monotone persuasion, beyond when optimal persuasion turns out to be monotone. Optimal persuasion can be nonmonotone when the state is discrete, which requires randomization, or when the objective function is irregular, which requires nonmonotone pooling. 
We provide two theorems that characterize optimal monotone persuasion in most prominent such cases, so they can be used as off-the-shelf results in follow-up work. Our proofs identify a candidate class of optimal monotone signals and show that any monotone signal outside of this class is dominated by a signal in this class. This approach can be applied more generally by suitably adjusting a candidate class of optimal monotone signals.

We make a case for monotone persuasion in the context of media censorship. But there are many other considerations that  lead to monotone persuasion. For example, a nonmonotone grading policy that gives better grades to worse performing students may be viewed as unfair or illegitimate and may be manipulated by strategic students. Moreover, the monotonicity restriction may arise due to Mirrlees incentive constraints (e.g., \citealt{Rayo} and \citealt{KL}). Finally, monotone persuasion is equivalent to deterministic delegation (see \citealt{KZ-WP}), so our results are also relevant for the delegation literature, which has primarily focused on deterministic mechanisms.

\newpage

\bibliographystyle{ecta-fullname} %
\bibliography{persuasionlit}  %

\newpage

\begin{appendix}

\section{Proofs}

\subsection{Proof of Theorem \ref{t:one}}

Let $\supp(F)=\{\omega_1,\dots,\omega_n\}$, with natural $n\geq 2$ and $\omega_1<\ldots<\omega_n$. For each $1\leq i<j\leq n$, denote $f_j=f(\omega_j)$, $f_{i:j}= f_i+\ldots+f_j$, and $m_{i:j}=(\omega_i f_i+\ldots+\omega_jf_j)/f_{i:j} $.

\begin{lemma}
	Each optimal monotone signal is deterministic upper censorship.
\end{lemma}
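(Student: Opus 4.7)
The proof is by induction on $n = |\supp(F)|$; the cases $n \leq 2$ are immediate since every monotone partition of at most two states is already upper censorship. For the inductive step, let $P = (B_1, \ldots, B_k)$ be an optimal monotone partition. I first reduce to $k = 2$. If $B_1 = \{\omega_1\}$, then $(B_2, \ldots, B_k)$ is optimal on the $(n-1)$-state subproblem $\{\omega_2, \ldots, \omega_n\}$ and hence upper censorship by induction; combined with the singleton $B_1$, $P$ itself is upper censorship. Otherwise $B_1$ is non-singleton, and I apply induction to the subproblem on $\{\omega_1, \ldots, \omega_{c-1}\}$, where $\omega_c = \min B_k$. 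The restricted partition $(B_1, \ldots, B_{k-1})$ is optimal for this subproblem, and the upper-censorship form combined with non-singleton $B_1$ forces the sub-partition to be the single pool $\{\omega_1, \ldots, \omega_{c-1}\}$. Thus $P = (B, B')$ with $B = \{\omega_1, \ldots, \omega_a\}$, $B' = \{\omega_{a+1}, \ldots, \omega_n\}$, and $a \geq 2$.

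I next show such a $P$ is strictly dominated by either no disclosure $U_0$ (value $V(\E[\omega])$) or the upper censorship $U_a$ that separates $B$ and pools $B'$ (value $\sum_{i \leq a} V(\omega_i) f_i + V(m_{B'}) f_{B'}$). If $\omega_a \leq \omega_M$, all states in $B$ lie in the convex region and strict convexity of $V$ gives $U_a > P$. If $m_B \geq \omega_M$, both $m_B$ and $m_{B'}$ lie in the concave region and strict concavity of $V$ gives $U_0 > P$.

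The principal obstacle is the remaining mixed case $m_B < \omega_M < \omega_a$, in which $B$ straddles the inflection point. Let $L$ be the tangent to $V$ at $m_B$ and set $h := V - L$. Convexity of $V$ on $[0, \omega_M]$ makes $h \geq 0$ there with $h(m_B) = 0$, and concavity of $V$ on $[\omega_M, 1]$ makes $h$ concave on that interval. Linearity of $L$ yields the two identities $U_a - P = \sum_{i \leq a} h(\omega_i) f_i$ and $U_0 - P = h(\E[\omega]) - f_{B'} h(m_{B'})$. Assume $U_a \leq P$, so $\sum_{i \leq a} h(\omega_i) f_i \leq 0$; then $h(\omega_i) < 0$ for some $\omega_i > \omega_M$ in $B$. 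Concavity of $h$ on $[\omega_M, 1]$ with $h(\omega_M) \geq 0$ then yields a unique crossover $x^* \in [\omega_M, 1)$ with $h \geq 0$ on $[\omega_M, x^*]$ and $h < 0$ on $(x^*, 1]$, and forces $\omega_a > x^*$. Consequently every state of $B'$ exceeds $x^*$, so $h(m_{B'}) < 0$ and the second term of $U_0 - P$ is strictly positive. If $\E[\omega] \leq \omega_M$, then $h(\E[\omega]) \geq 0$ and $U_0 > P$ at once. Otherwise I write $\E[\omega] = \lambda \omega_M + (1 - \lambda) m_{B'}$ and compute $\lambda = f_B (m_{B'} - m_B)/(m_{B'} - \omega_M) > f_B$, so $1 - \lambda < f_{B'}$; concavity of $h$ on $[\omega_M, 1]$ together with $h(\omega_M) \geq 0 > h(m_{B'})$ then gives $h(\E[\omega]) \geq (1 - \lambda) h(m_{B'}) > f_{B'} h(m_{B'})$, hence $U_0 > P$. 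In every case $P$ is strictly dominated by an upper censorship signal, completing the induction.
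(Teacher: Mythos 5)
Your proof is correct but takes a genuinely different route from the paper's. The paper argues locally: given any optimal $\mu$ that is not upper censorship, it picks two adjacent pooled cells $\{\omega_i,\dots,\omega_j\}$ (with $j>i$) and $\{\omega_{j+1},\dots,\omega_k\}$, compares $\mu$ to the perturbation $\mu_-$ that merges the two cells and the perturbation $\mu_+$ that splits $\omega_j$ off the first, and shows by combining four linear inequalities (a Fourier--Motzkin elimination) that $W\geq W_+$ forces $W<W_-$, so $\mu$ cannot be optimal. You instead run an induction on $n=|\supp(F)|$ to reduce to a two-cell partition $(B,B')$ with $B$ non-singleton, compare $P$ against the two extreme upper-censorship signals $U_0$ (pool all) and $U_a$ (separate $B$, pool $B'$), and in the mixed case $m_B<\omega_M<\omega_a$ use the tangent $L$ to $V$ at $m_B$ and the sign pattern of $h=V-L$ (nonnegative and convex on $[0,\omega_M]$, concave on $[\omega_M,1]$) to show $U_a\leq P\Rightarrow U_0>P$. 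Your tangent-line argument is cleaner than the paper's inequality bookkeeping, but the inductive reduction needs two small patches you left implicit: (i) when the sub-support lies entirely in the convex or entirely in the concave region, the inductive hypothesis (which assumes an s-shaped $V$) does not literally apply, and you should instead invoke the Kamenica--Gentzkow full/no-disclosure result, which still delivers (degenerate) upper censorship; and (ii) you should state the one-line replacement argument showing that the restriction of an optimal monotone partition to an initial or terminal block of cells is itself optimal for the conditional subproblem. Both are routine fixes and do not affect the soundness of the approach.
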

\begin{proof}
Suppose by contradiction that there exists an optimal monotone signal $\mu$ that is not deterministic upper censorship. Then there exist $1\leq i<j<k\leq n$ and two signal realizations: $m_{i:j}$ that pools states $\{\omega_i,\dots,\omega_j\}$ and $m_{j+1:k}$ that pools states $\{\omega_{j+1},\dots,\omega_k\}$. Let $\mu_-$ and $\mu_+$ be monotone signals that differ from $\mu$ only in that $\mu_-$ merges signal realizations $m_{i:j}$ and $m_{j+1:k}$ of $\mu$ into one signal realization $m_{i:k}$ and $\mu_+$ splits signal realization $m_{i:j}$ of $\mu$ into two signal realizations: $m_{i:j-1}$ and $\omega_j$. Denote the value of signals $\mu_-$, $\mu$, and $\mu_+$ by $W_-$, $W$, and~$W_+$. 

To obtain a contradiction, it suffices to show that $W\geq W_+$ implies $W< W_-$. So, suppose  that $W\geq W_+$, which is equivalent to
\beq
V(m_{i:j}) - \frac{\omega_j-m_{i:j}}{\omega_j-m_{i:j-1}}V(m_{i:j-1})-\frac{m_{i:j}-m_{i:j-1}}{\omega_j-m_{i:j-1}}V(\omega_j)\ge 0.\label{e:W_+}
\eeq
Since $V$ is strictly convex on $[0,\omega_M]$ and \eqref{e:W_+} holds, it follows that $\omega_M< \omega_j$.

We now show that $W< W_-$, which is equivalent to
\beq
V(m_{i:k})-\frac{m_{j+1:k}-{m_{i:k}}}{m_{j+1:k}-{m_{i:j}}}V(m_{i:j})-\frac{m_{i:k}-{m_{i:j}}}{m_{j+1:k}-{m_{i:j}}}V(m_{j+1:k})>0.\label{e:W_-}
\eeq
If $\omega_M\leq m_{i:j}$, then \eqref{e:W_-} follows from strict concavity of $V$ on $[\omega_M,1]$. So, suppose that $\omega_M\in (m_{i:j},\omega_j)$. In summary, we have
\beq\label{e:lllog}
m_{i:j-1}<m_{i:j}<\omega_M<\omega_j<m_{j+1:k}, \quad\text{and}\quad m_{i:j}<m_{i:k}<m_{j+1:k}.
\eeq
Since $V$ is strictly convex on $[0,\omega_M]$ and strictly concave on $[\omega_M,1]$, by \eqref{e:lllog}, we have
\begin{align}
&\frac{\omega_M-m_{i:j}}{\omega_M-m_{i:j-1}}V(m_{i:j-1})+\frac{m_{i:j}-m_{i:j-1}}{\omega_M-m_{i:j-1}}V(\omega_M)-V(m_{i:j})>0,\label{e:W0}\\
&V(\omega_j)-\frac{m_{j+1:k}-\omega_j}{m_{j+1:k}-\omega_M}V(\omega_M)-\frac{\omega_j-\omega_M}{m_{j+1:k}-\omega_M}V(m_{j+1:k})> 0.\label{e:W1}\\
&\frac{m_{i:k}-m_{i:j}}{m_{i:k}-m_{i:j-1}}V(m_{i:j-1})+\frac{m_{i:j}-m_{i:j-1}}{m_{i:k}-m_{i:j-1}}V(m_{i:k})-V(m_{i:j})> 0, \quad\text{if $m_{i:k}\le \omega_M$},\label{e:W2}\\
&V(m_{i:k})-\frac{m_{j+1:k}-m_{i:k}}{m_{j+1:k}-\omega_M}V(\omega_M)-\frac{m_{i:k}-\omega_M}{m_{j+1:k}-\omega_M}V(m_{j+1:k})> 0, \quad\text{if $m_{i:k}\ge \omega_M$}.\label{e:W3}
\end{align}
If $m_{i:k}\le \omega_M$, then adding the inequalities \eqref{e:W_+}, \eqref{e:W0}, \eqref{e:W1}, and \eqref{e:W2} multiplied by 
$(m_{j+1:k}-\omega_M)(m_{i:k}-m_{i:j})(\omega_j-m_{i:j-1})$, 
$(m_{j+1:k}-\omega_j)(m_{i:k}-m_{i:j})(\omega_M-m_{i:j-1})$, 
$(m_{i:j}-m_{i:j-1})(m_{i:k}-m_{i:j})(m_{j+1:k}-\omega_M)$, and 
$(\omega_j-\omega_M)(m_{j+1:k}-m_{i:j})(m_{i:k}-m_{i:j-1})$, respectively, yields \eqref{e:W_-}. 
If $m_{i:k}\ge \omega_M$, then adding the inequalities \eqref{e:W_+}, \eqref{e:W0}, \eqref{e:W1}, and \eqref{e:W3} multiplied by 
$(m_{j+1:k}-m_{i:k})(\omega_j-m_{i:j-1})(\omega_M-m_{i:j})$, 
$(m_{j+1:k}-m_{i:k})(\omega_M-m_{i:j-1})(\omega_j-m_{i:j})$, 
$(m_{j+1:k}-m_{i:k})(m_{i:j}-m_{i:j-1})(\omega_M-m_{i:j})$, and 
$(m_{j+1:k}-m_{i:j})(m_{i:j}-m_{i:j-1})(\omega_j-\omega_M)$, respectively, yields \eqref{e:W_-}.

Intuitively, in either case ($m_{i:k}\leq \omega_M$ or $m_{i:k}\geq \omega_M$), we have four linear inequalities (\eqref{e:W_+}, \eqref{e:W0}, \eqref{e:W1}, and \eqref{e:W2} or \eqref{e:W3}) with 6 variables ($V(m_{i:j-1})$, $V(m_{i:j})$, $V(\omega_M)$, $V(\omega_j)$, $V(m_{j+1:k})$, and $V(m_{i:k})$). Fourier-Motzkin elimination of three variables ($V(m_{i:j-1})$, $V(\omega_M)$, and $V(\omega_j)$) yields one inequality \eqref{e:W_-} with 3 variables ($V(m_{i:j})$, $V(m_{j+1:k})$, and $V(m_{i:k})$). Figure \ref{F:discrb} illustrates why \eqref{e:W_-} holds.
\end{proof}

\begin{figure}[ht]
\includegraphics[width=300pt]{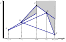}
\caption{Upper censorship when $V$ is s-shaped}\label{F:discrb}
\legend{Point $B$ is above line $AD$, because $W\geq W_+$. Point $C$ is above line $AB$, because $V$ is convex on $[0,\omega_M]$. Point $E$ is below line $CD$, because $V$ is concave on $[\omega_M,1]$. Point $(m_{i:k}, V(m_{i:k}))$ is in the shaded area, because $V$ is convex on $[0,\omega_M]$ and concave on $[\omega_M,1]$. Condition \eqref{e:W_-} states that the shaded area is above line $BE$.}
\end{figure}

We now show that the optimal deterministic upper censorship cutoff coincides with the optimal stochastic upper censorship cutoff. For each $z\in [\omega_1,\omega_n]$, define
\begin{gather*}
j(z)=\max\{i\in\{1,\ldots,n\}:\omega_i\le z\},\\ 
q(z)=\frac{z-\omega_{j(z)}}{\omega_{j(z)+1}-\omega_{j(z)}},\\
m(z)=\frac{(1-q(z))f_{j(z)}\omega_{j(z)}+\sum\nolimits_{i>j(z)}f_{i}\omega_i}{(1-q(z))f_{j(z)}+\sum\nolimits_{i>j(z)}f_{i}},\\
W(z)=\sum_{i<j(z)} f_iV(\omega_i)+q(z)f_{j(z)}V(\omega_{j(z)})+\left((1-q(z))f_{j(z)}+\sum\nolimits_{i>j(z)}f_{i}\right)V(m(z)).
\end{gather*}
Thus, every $z\in [\omega_1,\omega_n]$ represents a stochastic upper censorship signal with $(\omega_{j(z)},q(z))$, where $m(z)$ is the expected state conditional on the pooling signal realization and $W(z)$ is the value of this signal. Conversely, every stochastic upper censorship signal with $(\omega_j,q)$ in $\{\omega_1,...,\omega_{n-1}\}\times [0,1]$ can be represented by $z=(1-q)\omega_{j}+q\omega_{j+1}\in[\omega_1,\omega_n]$.\footnote{An upper censorship with $(\omega_n,q)$, for any $q\in [0,1]$, is the same as the upper censorship with $(\omega_{n-1},1)$.}  Also note that $z$ represents deterministic upper censorship iff $z\in\{\omega_1,...,\omega_n\}$.

Observe that $m(z)$ and $W(z)$ are continuous by construction. Letting
\[
\Delta (\omega,m)=V(\omega)-V(m)-V'(m)(\omega-m)
\]
and taking the derivative of $W(z)$ at $z\notin \{\omega_1,\ldots,\omega_n\}$, we obtain
\begin{align*}
m'(z)&=\frac{q'(z)\sum\nolimits_{i>j(z)} f_i(\omega_i-\omega_{j(z)})}{\left((1-q(z))f_{j(z)}+\sum\nolimits_{i>j(z)}f_{i}\right)^2}=\frac{q'(z)f_{j(z)}(m(z)-\omega_{j(z)})}{(1-q(z))f_{j(z)}+ \sum\nolimits_{i>j(z)}f_{i}},\\
W'(z)&=q'(z)f_{j(z)}(V(\omega_{j(z)})-V(m(z))+m'(z)\Big((1-q(z))f_{j(z)}+\sum_{i>j(z)}f_{i}\Big)V'(m(z))\\
&=q'(z)f_{j(z)}\big(V(\omega_{j(z)})-V(m(z))-V'(m(z))(\omega_{j(z)}-m(z))\big)\\
&=\frac{f_{j(z)}}{\omega_{j(z)+1}-\omega_{j(z)}}\Delta(\omega_{j(z)},m(z)).
\end{align*}

\begin{claim}\label{c:sc}
	For all $\omega_1\le z<z'<\omega_n$, we have
	\[
	\Delta(\omega_{j(z)},m(z))\le 0\implies 	\Delta(\omega_{j(z')},m(z'))<0.
	\]
\end{claim}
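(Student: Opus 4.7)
The plan is to exploit two structural features of the problem: within a subinterval $[\omega_j,\omega_{j+1})$ between consecutive support points, $\omega_{j(z)}$ is constant at $\omega_j$ while $m(z)$ is continuous and strictly increasing in $z$; and across a threshold $z=\omega_{j+1}$, $m(\cdot)$ stays continuous but $\omega_{j(z)}$ jumps upward from $\omega_j$ to $\omega_{j+1}$. The claim then reduces to treating these two cases separately and iterating across thresholds. First I would pin down the location of $m(z)$: from $\Delta(\omega_{j(z)},m(z))\le 0$ together with $\omega_{j(z)}<m(z)$ (which holds because $j(z)<n$, so the pool contains states strictly above $\omega_{j(z)}$), strict convexity of $V$ on $[0,\omega_M]$ rules out $m(z)\le\omega_M$; hence $m(z)>\omega_M$, and by monotonicity of $m(\cdot)$ also $m(z')\ge m(z)>\omega_M$. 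If $z'$ lies in the same subinterval as $z$, then $\omega_{j(z')}=\omega_{j(z)}=:\omega_a$, and the identity $\partial_m\Delta(\omega_a,m)=V''(m)(m-\omega_a)<0$ on the concave region yields $\Delta(\omega_a,m(z'))<\Delta(\omega_a,m(z))\le 0$ strictly.

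To cross one threshold, I take the left-limit $\tilde z\to\omega_{j(z)+1}^-$ within the subinterval, which by continuity gives $\Delta(\omega_a,m_{b:n})\le 0$, where $\omega_b=\omega_{j(z)+1}$ and $m_{b:n}>\omega_M$. I then show $\Delta(\omega_b,m_{b:n})<0$ strictly in two subcases. If $\omega_b\ge\omega_M$, both arguments lie in the strict-concavity region with $\omega_b<m_{b:n}$ strictly (because $b<n$, since $z'<\omega_n$ forces every crossed threshold to stay $\le\omega_{n-1}$), so the strict tangent inequality does the job. If $\omega_b<\omega_M$, let $L$ denote the tangent to $V$ at $m_{b:n}$ and set $g:=V-L$; then $g$ is strictly convex on $[0,\omega_M]$, $g(\omega_a)\le 0$ by assumption, and $g(\omega_M)<0$ strictly from strict concavity of $V$ on $[\omega_M,1]$ together with $\omega_M<m_{b:n}$. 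Writing $\omega_b=\lambda\omega_a+(1-\lambda)\omega_M$ for some $\lambda\in(0,1)$, strict convexity of $g$ yields $g(\omega_b)<\lambda g(\omega_a)+(1-\lambda)g(\omega_M)<0$. With a strict inequality in hand after the first threshold, I would iterate the within-subinterval and across-threshold arguments to reach any $z'>z$ with $z'<\omega_n$.

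The main obstacle is the jump subcase $\omega_a<\omega_b<\omega_M$, where both endpoints lie in the convex region of $V$ and $\Delta$ could a priori turn positive as one moves from $\omega_a$ to $\omega_b$. The key device is the auxiliary function $g=V-L$: its strict convexity on $[0,\omega_M]$, combined with $g(\omega_M)<0$ strictly (forced by strict concavity of $V$ on the opposite side of the inflection together with $\omega_M<m_{b:n}$), pins $g$ strictly below zero throughout the open segment $(\omega_a,\omega_M)$, and in particular at $\omega_b$.
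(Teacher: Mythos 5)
Your proof is correct, and it takes a genuinely different route from the paper's. The paper argues by contradiction: supposing $\Delta(\omega_{j(z)},m(z))\le 0$ and $\Delta(\omega_{j(z')},m(z'))\ge 0$, it rewrites $\Delta(\omega,m)=\int_\omega^m V''(x)(x-\omega)\,\df x$ via integration by parts and derives a contradiction through a four-step chain of integral inequalities, notably exploiting that $(x-\omega')/(x-\omega)$ is increasing to compare the two integrands pointwise. You instead give a direct argument that decomposes the move from $(\omega_{j(z)},m(z))$ to $(\omega_{j(z')},m(z'))$ into an increase in the second argument (within a subinterval, using $\partial_m\Delta(\omega,m)=V''(m)(m-\omega)<0$ once $m>\omega_M$) and an increase in the first argument (across a threshold, using the auxiliary function $g=V-L$ with $L$ the tangent at the pool mean: strict convexity of $g$ on $[0,\omega_M]$ together with $g(\omega_M)<0$ pushes $g$ strictly below zero at the new cutoff). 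Your version is more modular and geometrically transparent, and in fact the threshold-by-threshold iteration is unnecessary: you could go from $(\omega,m)$ to $(\omega',m')$ directly in two steps, first raising $m$ to $m'$ and then raising $\omega$ to $\omega'$, since both sub-arguments only use $\omega\le\omega'<m\le m'$ and $m>\omega_M$. The paper's integral-comparison argument is more compact but its choice of weights is less obviously motivated; your $g=V-L$ device makes the key single-crossing mechanism explicit.
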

\begin{proof}
Suppose by contradiction that $\omega_1\leq z<z'<\omega_n$, $\Delta(\omega_{j(z)},m(z))\leq 0$, and $\Delta(\omega_{j(z')},m(z'))\geq 0$.
By definition, $\omega_{j(z)}$ is increasing in $z$,  $m(z)$ is strictly increasing in $z$, and $\omega(z)<m(z)$, for $z\in [\omega_1,\omega_n)$. Thus, letting $\omega=\omega_{j(z)}$, $\omega'=\omega_{j(z')}$, $m=m(z)$, and $m'=m(z')$, we have $\omega\leq \omega'$, $m<m'$, $\omega<m$, and $\omega'<m'$.  
By integration by parts,
\begin{align*}
\int_{\omega}^{m} V''(x) (x-\omega)\df x &=V'(x)(x-\omega)\big|_{\omega}^{m}-\int_{\omega}^{m}V'(x)\df x	\\
&=V'(m) (m-\omega)-(V(m)-V(\omega))=\Delta(\omega,m). 
\end{align*}
Since $V$ is strictly convex on $[0,\omega_M]$ and strictly concave on $[\omega_M,1]$, we have $V''(x)>0$ for almost all $x\in [0,\omega_M]$ and  $V''(x)<0$ for almost all $x\in [\omega_M,1]$. So, since $\omega<m$ and $\Delta(\omega,m)\leq 0$, we have $m>\omega_M$. Similarly, since $\omega'<m'$ and $\Delta(\omega',m')\geq 0$, we have $\omega'<\omega_M$. Then we obtain a contradiction
\begin{gather*}
\Delta(\omega',m')=\int_{\omega'}^{m'} V''(x) (x-\omega')\df x < \int_{\omega'}^{m} V''(x) (x-\omega')\df x\\
=\int_{\omega'}^{m} V''(x) (x-\omega)\frac{x-\omega'}{x-\omega}\df x \notag
\leq \frac{\omega_M-\omega'}{\omega_M-\omega} \int_{\omega'}^{m} V''(x) (x-\omega)\df x\\
\leq\frac{\omega_M-\omega'}{\omega_M-\omega} \int_{\omega}^{m}V''(x)(x-\omega)\df x = \frac{\omega_M-\omega'}{\omega_M-\omega}\Delta(\omega,m)\leq 0,
\end{gather*}
where the first inequality holds because $\omega_M<m< m'\leq 1$ and $V$ is strictly concave on $[\omega_M,1]$, the second inequality holds because $V$ is convex on $[0,\omega_M]$, concave on $[\omega_M,1]$, and $(x-\omega')/(x-\omega)$ is increasing in $x$, the third inequality holds because $0\leq \omega\leq \omega'<\omega_M$ and $V$ is convex on $[0,\omega_M]$, and the fourth inequality holds because $\Delta (\omega,m)\leq 0$.
\end{proof}

Since $f_{j(z)}/(\omega_{j(z)+1}-\omega_{j(z)})>0$, Claim \ref{c:sc} implies that $W'(z)$ is strictly single crossing from above on $[\omega_1,\omega_n)$. This implies that the optimal unrestricted signal is unique and is stochastic upper censorship with some cutoff $\omega^*$. Furthermore, this implies that each optimal monotone signal is deterministic upper censorship with the same cutoff $\omega^*$ and some $q^{**}\in \{0,1\}$.
\qed

\subsection{Proof of Theorem \ref{t-two}}

It is convenient to represent a monotone signal by a {\it pooling set} $P\subset [0,1]$ of states that are not separated by this signal. Since the state is continuous, w.l.o.g., each pooling interval is open. Thus, the pooling set is a union of some disjoint nonempty open intervals, $P=\bigcup_i(\ul\omega_i,\ol\omega_i)$.\footnote{We define open sets in $[0,1]$ rather than in $\mathbb{R}$; e.g., $[0,1/2)\cup(1/2,1]$ is open.} A pooling set $P$ corresponds to the monotone signal $\mu_P$ given by
\[
\mu_P(\omega)
=
\begin{cases}
\omega, &\text{$\omega\notin (\ul\omega_i,\ol\omega_i)$ for all $i$},\\
m_i, &\text{$\omega \in (\ul\omega_i,\ol\omega_i)$ for some $i$},
\end{cases}
\]
where $m_i=\E [\omega|\omega\in [\ul \omega_i,\ol \omega_i]]$.
The distribution $G_P$ of $\mu_P(\omega)$ is given by
\begin{equation*}\label{GP}
		G_P(\omega)=
		\begin{cases}
		F(\omega),&\text{ if } \omega \notin (\ul\omega_i,\ol\omega_i)\text{ for all }i,\\
		F(\ul \omega_i),&\text{ if } \omega \in (\ul \omega_i,m_i)\text{ for some }i,\\
		F(\ol \omega_i),&\text{ if } \omega \in [m_i,\ol \omega_i)\text{ for some }i.
		\end{cases}
\end{equation*}
Solving the monotone persuasion problem is thus equivalent to finding an \emph{optimal pooling set} $P$ that maximizes $\int_0^1 V(\omega)\df G_P(\omega)$. 

\begin{lemma}\label{l:T2}
	Each optimal pooling set $P$ takes one of the following forms:
	\begin{enumerate}
		\item  Interval disclosure $P=[0,\omega^*_L)\cup(\omega^*_R,1]$ with $m^*_L<\omega_L <\omega^*_L<\omega^*_R< \omega_R<m^*_R$.
		\item Cutoff rule $P=[0,\omega^*)\cup(\omega^*,1]$ with $m^*_L<\omega_L<\omega_R<m^*_R$.
		\item No disclosure $P=[0,1]$. 
	\end{enumerate}
\end{lemma}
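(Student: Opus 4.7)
The plan is to prove Lemma~\ref{l:T2} in two stages that mirror the sketch after Theorem~\ref{t-two}: first reduce any optimal pooling set to interval-disclosure form, and then use the first-order conditions together with the m-shape of $V$ to identify the three cases.

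In Stage~1, I will show that any optimal $P$ has the form $P=[0,\omega^*_L)\cup(\omega^*_R,1]$ for some $0\leq\omega^*_L\leq\omega^*_R\leq 1$, possibly with one or both pooling intervals degenerate (yielding a cutoff rule or no disclosure). Following the hint in the text, the approach is to take a $P=\bigcup_i(\ul\omega_i,\ol\omega_i)$ that is not of this form and construct a dominating pooling set $P'$ in interval-disclosure form via one of three local operations: \emph{splitting} a pooling interval by replacing $(\ul\omega,\ol\omega)$ with separation of its states; \emph{merging} two adjacent pooling intervals, together with the separated states between them, into one pooling interval; and \emph{extension to the boundary}, replacing $(\ul\omega,\ol\omega)$ by $(0,\ol\omega)$ or $(\ul\omega,1)$. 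The m-shape of $V$ dictates the choice: on the strictly convex middle region $[\omega_L,\omega_R]$, Jensen's inequality makes splitting strictly better than pooling whenever the pooling interval lies in this region; on the strictly concave tails $[0,\omega_L]$ and $[\omega_R,1]$, merging or boundary extension strictly improves over any isolated interior pooling interval whose mean lies in those regions. For pooling intervals that straddle an inflection point $\omega_L$ or $\omega_R$, I will combine several convexity and concavity inequalities with carefully chosen nonnegative weights, analogously to the Fourier--Motzkin-style step in the proof of Theorem~\ref{t:one}, to exhibit a single dominating modification.

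In Stage~2, given $P=[0,\omega^*_L)\cup(\omega^*_R,1]$, I split into subcases. If $0<\omega^*_L<\omega^*_R<1$, the interior first-order conditions at both cutoffs yield~\eqref{uml} and~\eqref{umh}, and the m-shape then forces $m^*_L<\omega_L<\omega^*_L<\omega^*_R<\omega_R<m^*_R$: for instance, if $m^*_L$ and $\omega^*_L$ were both in the strictly convex region $[\omega_L,\omega_R]$, the tangent to $V$ at $m^*_L$ would lie strictly below $V$ throughout $[\omega_L,\omega_R]$ except at $m^*_L$, contradicting~\eqref{uml}; the symmetric argument rules out $m^*_R,\omega^*_R$ both lying in $[\omega_L,\omega_R]$, while strict concavity of $V$ on the tails rules out both points lying in the same tail. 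This yields case~(1). If $\omega^*_L=\omega^*_R=\omega^*\in(0,1)$, the first-order condition yields~\eqref{u*m}, and an analogous shape argument gives $m^*_L<\omega_L<\omega_R<m^*_R$ (case~(2)). If all of $[0,1]$ is pooled, this is case~(3). The residual degenerate configurations, where exactly one of the pooling intervals is empty (say $\omega^*_L=0$ but $\omega^*_R<1$), are ruled out by strict concavity of $V$ near the corresponding boundary: inserting an arbitrarily small pooling interval $[0,\eps)$ strictly improves, contradicting optimality.

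The main obstacle is Stage~1: the construction of a dominating interval disclosure becomes delicate when a pooling interval straddles an inflection point of $V$, since then neither splitting nor boundary extension is obviously dominant. As in the proof of Theorem~\ref{t:one}, the resolution should come from combining several shape inequalities with appropriate nonnegative coefficients to force a single dominating inequality.
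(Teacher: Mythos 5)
Your general strategy---identify a candidate class and show that anything outside it is dominated---matches the paper, and your Stage~2 shape arguments for the strict inequalities are in the right spirit. But Stage~1 has a genuine gap. The difficult configuration is not a \emph{single} pooling interval straddling an inflection point. After the preliminary reductions the paper makes (every separating interval lies in $[\omega_L,\omega_R]$; at most one pooling interval has its mean in each concave tail), the case that survives when $P$ has no separating interval is a \emph{pair} of adjacent pooling intervals sharing a boundary $\omega=\ol\omega_1=\ul\omega_2$ with $m_1<\omega<\omega_L<m_2\le\omega_R$. None of your three operations (split, merge, extend to the boundary) resolves this on shape grounds alone, and a Fourier--Motzkin recombination of pure convexity/concavity inequalities does not either: without pinning down the location of the shared boundary $\omega$, merging the two intervals need not improve the objective. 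The paper instead invokes the interior first-order condition $\delta(\omega)=0$ at the shared boundary (where $\delta$ is the gap between the tangent lines to $V$ at $m_1$ and $m_2$), deduces from the m-shape that $V'(m_2)$ lies strictly below the chord slope from $m_1$ to $m_2$, and then exploits strict quasiconvexity of $V'$ on $[m_1,m_2]$ (decreasing, then increasing) to conclude that $V$ lies strictly above the chord on $(m_1,m_2)$, so merging strictly improves. That quasiconcavity step is not a finite linear system amenable to Fourier--Motzkin elimination, and it essentially requires an optimality condition about $\omega$, not just shape inequalities; your sketch does not supply this and does not identify the two-interval configuration as the crux.

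You also omit the paper's key structural tool: applying the s-shape cutoff result of \citet{KMZ} to s-shaped restrictions of $V$ to subintervals $[\omega,1]$ and $[0,\omega]$. When a separating interval exists, this immediately forces $P$ into interval-disclosure form and delivers the strict inequalities $m^*_L<\omega_L<\omega^*_L$ and $\omega^*_R<\omega_R<m^*_R$; in the no-separating-interval case it supplies the fact $\ul\omega_2<\omega_L$ that is needed even to set up the hard configuration described above. Without it, your local split/merge/extend operations do not obviously drive an arbitrary optimal $P$ into one of the three claimed forms.
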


\begin{proof}
We start with two simple claims.	

\begin{claim}[\citet{KMZ}] \label{l:KMZ}
Let $P$ be an optimal pooling set. 
\begin{enumerate}
	\item If $V$ is strictly concave on $[0,1]$, then $P=[0,1]$.
	\item If $V$ is s-shaped on $[0,1]$, then $P=(\omega^*,1]$, with $\omega^*<\omega_M<\E[\omega|\omega\in[\omega^*,1]]$.
\end{enumerate}
\end{claim}
\begin{proof}
Parts 1 and 2 follow from Corollary 1 and Theorem 1 in \citet{KMZ}.%
\end{proof}
\begin{claim}\label{l:cardinal}
Each optimal pooling set $P$ has the following properties.
\begin{enumerate}
	\item Each separating interval $[\ol \omega_i,\ul \omega_{i+1}]$, with $\ol \omega_i<\ul \omega_{i+1}$, is such that $[\ol\omega_i,\ul \omega_{i+1}]\subset [\omega_L,\omega_R]$. 
	\item There is at most one pooling interval $(\ul \omega_i,\ol\omega_i)$, with $\ul \omega_i<\ol\omega_i$, such that $m_i\in [0,\omega_L]$.
\end{enumerate} 
\end{claim}
\begin{proof}

To prove Part 1, suppose by contradiction that either $\ol \omega_i<\omega_L$ or $\ul \omega_{i+1}>\omega_R$. Then a pooling set that differs from $P$ only in that it pools all states in $(\ol \omega_i,\omega_L)$ or $(\omega_R,\ul\omega_{i+1})$ yields a strictly higher value by Part 1 of Claim \ref{l:KMZ}, as $V$ is strictly concave on $[0,\omega_L]$ and $[\omega_R,1]$. 

To prove Part 2, suppose by contradiction that there are two pooling intervals $(\ul \omega_i,\ol\omega_i)$ and $(\ul \omega_j,\ol\omega_j)$, with $\ol \omega_i\leq \ul \omega_j$, such that $m_i,m_j\in [0,\omega_L]$. Then a pooling set that differs from $P$ only in that it pools all states in $(\ul \omega_i,\ol \omega_j)$ yields a strictly higher value by Part 1 of Claim \ref{l:KMZ}, as $V$ is strictly concave on $[0,\omega_L]$ and the support of $G_{P}$ conditional on $(\ul \omega_i,\ol\omega_j)$ is a subset of $[0,\omega_L]$. %
\end{proof}

Suppose by contradiction that an optimal $P$ does not take Form 1, 2, or 3. If there is a separating interval $[\ol \omega_i,\ul \omega_{i+1}]$, with $\ol \omega_i<\ul \omega_{i+1}$, then $P=[0,\ol \omega_i)\cup(\ul \omega_{i+1},1]$, which is Form 1, leading to a contradiction. Indeed, by Part 1 of Claim \ref{l:cardinal}, we have $[\ol \omega_i, \ul \omega_{i+1}]\subset [\omega_L,\omega_R]$. By Part 2 of Claim \ref{l:KMZ}, we have $P\cap[\ol \omega_{i},1]=(\ul \omega_{i+1},1]$, as $V$ is s-shaped on $[\ol \omega_i,1]\subset [\omega_L,1]$. Analogously, $P\cap[0,\ul \omega_{i+1}]=[0,\ol \omega_{i})$. 

Next, suppose that there is no separating interval. Since $P$ does not take Forms 2 or 3, $P$ has two pooling intervals,  $(\ul \omega_i,\ol \omega_i)$ and $(\ul \omega_{i+1},\ol \omega_{i+1})$, with $\ol \omega_i=\ul \omega_{i+1}$, and $m_i\ge \omega_L$ or $m_{i+1}\le \omega_R$. W.l.o.g., suppose $m_{i+1}\le \omega_R$. By Part 2 of Claim \ref{l:cardinal}, $m_{i+1}>\omega_L$. By Part 2 of Claim \ref{l:KMZ}, $\ul\omega_{i+1}<\omega_L$. In summary, we have
\[
m_i<\ol\omega_{i}=\ul\omega_{i+1}<\omega_L<m_{i+1}\le\omega_R.
\]
Let $\omega=\ol\omega_{i}=\ul\omega_{i+1}$. Since $P$ is optimal, the marginal effect of changing $\omega$ should be 0, so, letting
\begin{align*}
\hat V(x)&=V(m_i)(F(x)-F(\ul \omega_i))+V(m_{i+1})(F(\ol \omega_{i+1})-F(x)), \quad\text{and}\\
\delta(x)&=V(m_i)+V'(m_i)(x-m_i)-V(m_{i+1})-V'(m_{i+1})(x -m_{i+1}),
\end{align*}
we have
\begin{gather*}
\hat V'(\omega)=V'(m_i)\frac{\df m_i}{\df \omega}(F(\omega)-F(\ul \omega_i))+V'(m_{i+1})\frac{\df m_{i+1}}{\df \omega}(F(\ol \omega_{i+1})-F(\omega))\\+(V(m_i)-V(m_{i+1}))f(\omega)\\
=[V(m_i)+V'(m_i)(\omega-m_i)-V(m_{i+1})-V'(m_{i+1})(\omega -m_{i+1})]f(\omega)=\delta(\omega)f(\omega)=0.
\end{gather*}
Thus, $\hat V'(\omega)=0$ iff $\delta(\omega)=0$.  Since $V$ is strictly concave on $[m_i,\omega_L]\subset[0,\omega_L]$ and strictly convex on $[\omega_L,m_{i+1}]\subset[\omega_L,\omega_R]$, we have $V(m_i)+V'(m_i)(\omega_L-m_i)>V(\omega_L)$ and $V(m_{i+1})+V'(m_{i+1})(\omega_L-m_{i+1})<V(\omega_L)$, respectively. Thus, 
\begin{gather*}
\delta(\omega_L)=V(m_i)+V'(m_i)(\omega_L-m_i)-V(m_{i+1})-V'(m_{i+1})(\omega_L-m_{i+1})>0.
\end{gather*}
Next, since $\delta(x)$ is linear in $x$ and $m_{i}<\omega<\omega_L$, we have
\beq\label{e:ppop}
\delta(m_{i})<\delta(\omega)=0<\delta(\omega_L).
\eeq
Since $\delta(m_{i})=(m_{i+1}-m_i)V'(m_{i+1})+V(m_i)-V(m_{i+1})$,
by \eqref{e:ppop}, we have
\beq\label{e:pop}
V'(m_{i+1})<\frac{V(m_{i+1})-V(m_i)}{m_{i+1}-m_i}.
\eeq
Let
\[
\nu(x)=V(x)-\frac{m_{i+1}-x}{m_{i+1}-m_i}V(m_i)-\frac{x-m_i}{m_{i+1}-m_i}V(m_{i+1}). 
\]
We have
\begin{gather*}
\nu'(x)=V'(x)-\frac{V(m_{i+1})-V(m_i)}{m_{i+1}-m_i}.
\end{gather*}

\begin{figure}[t!] 
	\centering 
	\includegraphics[scale=0.7]{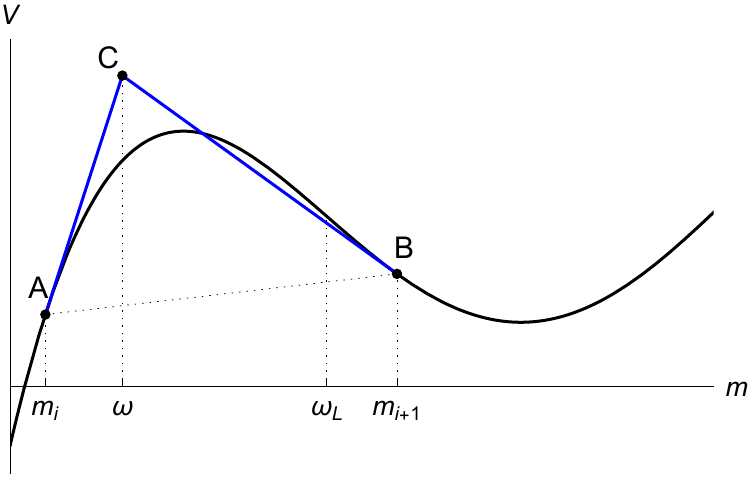}
	\bigskip
	\caption{Interval disclosure when $V$ is m-shaped.}
	\label{Fig:PropSwitchC}
	\legend{The tangents to $V$ at points $A$ and $B$ intersect at point $C$, because $\delta(\omega)=0$. Condition $\nu(x)>0$ for $x\in (m_i,m_{i+1})$ states that $V$ is above line segment $AB$.}
\end{figure}

Since $V$ is strictly concave on $[m_i,\omega_L]\subset[0,\omega_L]$ and strictly convex on $[\omega_L,m_{i+1}]\subset[\omega_L,\omega_R]$, $V'(x)$ is strictly quasiconvex on $[m_i,m_{i+1}]$, and so is $\nu'(x)$. Moreover, by \eqref{e:pop}, $\nu'(m_{i+1})<0$. Hence, $\nu'(x)$ is strictly single crossing from above on $[m_i,m_{i+1}]$. Thus, $\nu(x)$ is strictly quasiconcave on $[m_i,m_{i+1}]$. By $\nu(m_i)=\nu(m_{i+1})=0$, it follows that $\nu(x)>0$ for all $x\in(m_i,m_{i+1})$. Figure \ref{Fig:PropSwitchC} illustrates why $\nu(x)>0$.
Hence, a pooling set that differs from $P$ only in that it pools all states in $(\ul \omega_i,\ol \omega_{i+1})$ yields a strictly higher value, leading to a contradiction.
\end{proof}

By Proposition 3 in \citet{Kolotilin2017}, $P=[0,\omega^*_L)\cup(\omega^*_R,1]$, with $\omega_L <\omega^*_L<\omega^*_R< \omega_R$, is optimal iff \eqref{uml} and \eqref{umh} hold. Moreover, by Lemma \ref{l:T2}, $m^*_L<\omega_L$ and $m^*_R>\omega_R$. So, Part 1 of Theorem \ref{t-two} follows. If such $\omega^*_L<\omega^*_R$ do not exist, then $P$ takes Form 2 or 3 of Lemma \ref{l:T2}. Clearly, $P=[0,1]$ is suboptimal iff \eqref{u*e} holds for some $\omega^*\in(0,1)$. Moreover, if $P=[0,\omega^*)\cup(\omega^*,1]$ is optimal, then \eqref{u*m} holds, and $m_L^*<\omega_L$ and $m_R^*>\omega_R$ by Lemma \ref{l:T2}. So, Parts 2 and 3 of Theorem \ref{t-two} follow.
\qed

\subsection{Proof of Proposition \ref{L:M-C}}

Let $X\subset C$. For each $\omega\in [0,1]$, define  
\begin{gather*}
\ul c_X(\omega)=\sup\left(\{c\in C\backslash X: c\leq \omega\}\cup\{0\}\right),\quad 
\ol c_X(\omega)=\inf\left(\{c\in  C\backslash X: c> \omega\}\cup\{1\}\right),\\
\text{and}\quad \mu(\omega)=\E[\theta|\theta\in [\ul c_X(\omega),\ol c_X(\omega)]].
\end{gather*}
Observe that $\mu$ is a monotone signal such that $G_\mu=G_X$. Thus, $\mathcal G_C\subset \mathcal G_M$. 

Conversely, let $\mu$ be a monotone signal.  
For each $m$ such that $\mu(\omega)=m$ for some $\omega\in \supp(F)$, define
\begin{gather*}
\ul x_\mu(m)=\inf\left\{\omega\in \supp(F): \mu(\omega)=m\right\},\quad\ol x_\mu(m)=\sup\left\{\omega\in \supp(F): \mu(\omega)=m\right\},\\
\text{and}\quad X=\Big(\bigcup\nolimits_{m\in\mu(\supp(F))}(\ul x_\mu(m),\ol x_\mu(m)]\Big)\bigcap C.
\end{gather*}
Observe that $X$ is a censorship policy such that $G_X=G_\mu$. Thus, $\mathcal G_M\subset \mathcal G_C$. 
\qed

\end{appendix}

\end{document}